\documentclass[12pt]{article}
\usepackage{amsmath, amsthm}
\newcommand{\ds}{\displaystyle}

\textwidth 12cm
\textheight 20cm


\newtheoremstyle{theorem}
  {10pt}		  
  {10pt}  
  {\sl}  
  {\parindent}     
  {\bf}  
  {. }    
  { }    
  {}     
\theoremstyle{theorem}
\newtheorem{theorem}{Theorem}

\newtheoremstyle{defi}
  {10pt}		  
  {10pt}  
  {\rm}  
  {\parindent}     
  {\bf}  
  {. }    
  { }    
  {}     
\theoremstyle{defi}



\begin{document}

\begin{center}
{\Large\bf Symmetry and Singularity Properties of  Steen-Ermakov-Milne-Pinney Equations}\\[5mm]
{\Large K Krishnakumar}\\
{\large Department of Mathematics, Srinivasa Ramanujan Centre, SASTRA Deemed to be University, Kumbakonam 612 001, India.\\krishapril09@gmail.com\\[5mm] A Durga Devi \\ Department of Physics, Srinivasa Ramanujan Centre, SASTRA Deemed to be University, \\Kumbakonam 612 001, India.\\raghadurga@gmail.com\\[5mm] R Sinuvasan \\ Department of Mathematics, VIT-AP University, Amaravathi-522 237, Andhra Pradesh, India.\\rsinuvasan@gmail.com\\[5mm] PGL Leach \\ School of Mathematics, Statistics and Computer Science, University of KwaZulu-Natal,\\ Private Bag X54001\\ and \\ Institute for Systems Science,  \\Department of Mathematics, Durban University of Technology, POB 1334, Durban 4000,\\ Republic of South Africa}\\
 leach@ucy.ac.cy\\[3mm]

{\bf ABSTRACT}
\end{center}

We examine the general element of the class of ordinary differential equations, $yy^{(n+1)}+\alpha y'y^{(n)}=0$, for its Lie point symmetries. We observe that the algebraic properties of this class of equations display an attractive set of patterns, the general member of the class can have three type of Algebra, $(n+1)A_1 \oplus_s\{A_1 \oplus sl(2,R)\}$, $A_1 \oplus sl(2,R)$ or $A_2 \oplus A_1$, for different values of $\alpha$. We look at the singularity properties of these equations for various values of $\alpha$.

{\bf MSC Subject Classification:}  34A05; 34A34; 34C14; 22E60.

{\bf Key Words and Phrases: }Symmetries; Singularities; Integrability.

\section{Introduction}
The Steen-Ermakov-Milne-Pinney Equation \cite{Steen 74,Ermakov 08,Milne 30,Pinney 50}
$$ \rho''(t)+\omega^2(t)\rho(t)=\frac{1}{\rho(t)^3}$$
is well known for its frequent occurrence in various applications. It is probably  less well known as an integral of a third-order equation of maximal symmetry \cite{Morris 17} and this may well be the source of Pinney's famous solution which was presented without proof. The simplest form  of the equation is obtained by setting $\omega=0$, that is,
$$ \rho''(t)=\frac{1}{\rho(t)^3}.$$
A natural generalisation, namely
\begin{equation}\label{0.1}
yy^{(n+1)}+\alpha y'y^{(n)}=0,
\end{equation}
was studied by Moyo {\it et al.} \cite{Moyo 06} with particular reference to its integrability.

We examine (\ref{0.1}) in terms of its symmetry and singularity properties.

\section{Symmetry Properties}

We consider the ordinary differential equation
\begin{equation}\label{2.1}
yy^{(n+1)}+\alpha y'y^{(n)}=0
\end{equation}
We examine Equation (\ref{2.1}) for its symmetry properties. If $n=1$, there are eight Lie point symmetries given by\footnote{For the calculation of the symmetries we use the Mathematica add-on Sym \cite{Dimas 05 a, Dimas 06 a, Dimas 08 a, Andriopoulos 09 a}.}
\begin{equation}
\begin{array}{llllllll}
\Gamma_1 & = & \partial_x \\
\Gamma_2 & = & x \partial_x\\
\Gamma_3 & = & y \partial_y\\
\Gamma_4 & = & x y \partial_y\\
\Gamma_5 & = & \log(y) \partial_x\\
\Gamma_6 & = & y \log(y) \partial_y\\
\Gamma_7 & = & x^2 \partial_x+x y \log(y) \partial_y\\
\Gamma_8 & = & x \log(y) \partial_x+(y \log(y))^2\partial_y
\end{array}
\end{equation}
for $\alpha=-1.$  Equally for $\alpha \neq -1$ there are eight Lie point symmetries.  Now the value of $\alpha$ intrudes into the expressions for some of the symmetries.  The symmetries are
\begin{equation}
\begin{array}{llllllll}
\Gamma_1 & = & \partial_x \\
\Gamma_2 & = & x \partial_x\\
\Gamma_3 & = & y \partial_y\\
\Gamma_4 & = & y^{-\alpha} \partial_y\\
\Gamma_5 & = & x y^{-\alpha} \partial_y\\
\Gamma_6 & = & \frac{y^{1+\alpha}}{\alpha+1} \partial_x\\
\Gamma_7 & = & (1+\alpha)x^2 \partial_x+x y  \partial_y\\
\Gamma_8 & = & (1+\alpha)x y^{1+\alpha} \partial_x+y^{2+\alpha} \partial_y.
\end{array}
\end{equation}
Because the maximal number of Lie point symmetries for a scalar second-order ordinary differential is eight \cite{Lie 67 a}[p 405], for $n = 1$ the algebra is $sl(3,\,R)$ irrespective of the value of $\alpha$.

When we turn to the second member of the class, namely
\begin{equation} \label{2.2}
y y'''+\alpha y'y'' = 0,
\end{equation}
we obtain three possible algebraic structures. These are
\[ \{ \partial_x, x \partial_x, x^2 \partial_x+2xy \partial_y, y \partial_y, \partial_y, x \partial_y, x^2 \partial_y\}, \]
\[ \{\partial_x, x \partial_x, x^2 \partial_x+x y \partial_y, y \partial_y, \frac{\partial_y}{y},\frac{x \partial_y}{y},\frac{x^2 \partial_y}{y}\} \text{and} \]
\[ \{ \partial_x, x \partial_x, y \partial_y\} \]
corresponding to $\alpha=0$, $\alpha=\frac{3}{1}$ and general $\alpha$, respectively. For $\alpha=0$ and $\alpha=\frac{3}{1}$ the algebra is $3A_1 \oplus_s\{A_1 \oplus sl(2,R)\}$, for general $\alpha$ the algebra is $A_2 \oplus A_1$. (We make use of the Mubarakzyanov Classification Scheme \cite{Morozov 58 a, Mubarakzyanov 63 a, Mubarakzyanov 63 b, Mubarakzyanov 63 c})

The third member of the class is the prototype for subsequent equations and we find the symmetries
\[ \{ \partial_x, x \partial_x, x^2 \partial_x+3xy \partial_y, y \partial_y, \partial_y, x \partial_y, x^2 \partial_y, x^3 \partial_y\}, \]
\[ \{ \partial_x, x \partial_x, x^2 \partial_x+2xy \partial_y,y \partial_y\}\text{and} \]
\[ \{ \partial_x, x \partial_x, y \partial_y\} \]
corresponding to $\alpha=0$, $\alpha=\frac{4}{2}$ and general $\alpha$. For $\alpha=0$  the algebra is $4A_1 \oplus_s\{A_1 \oplus sl(2,R)\}$, for $\alpha=\frac{4}{2}$ the algebra is $A_1 \oplus sl(2,R)$ and for general $\alpha$ the algebra is $A_2 \oplus A_1$.

These observations naturally lead to the following theorem for various values of $\alpha$.

\begin{theorem}
In general, symmetries of an $n^{th}$-order equation of the type $yy^{(n+1)}+\alpha y'y^{(n)}=0 $  are given by
\begin{center}
\begin{tabular}{| c | l |}
  \hline
         &  $ \partial_y, x \partial_y, \cdots x^{(n)}\partial_y $\\
          $\alpha=0$ & $y \partial_y$\\
          & $ \partial_x, x \partial_x+\frac{n}{2} y \partial_y, x^2 \partial_x+nxy \partial_y$\\
  \hline

   $\alpha=\frac{n+1}{n-1}$  &  $\partial_x, x \partial_x+\frac{n-1}{2} y \partial_y,x^2 \partial_x+(n-1) x y \partial_y,y \partial_y$ \\

   \hline

  $\alpha$=else & $\partial_x, x \partial_x,y \partial_y$.\\
  \hline
\end{tabular}
\end{center}

Equation $yy^{(n+1)}+\alpha y'y^{(n)}=0 $ has $n+5$, $4$ and $3$ point symmetries corresponding to $\alpha=0$, $\alpha=\frac{n+1}{n-1}$ and $\alpha=$else, respectively. The corresponding algebras are $(n+1)A_1 \oplus_s\{A_1 \oplus sl(2,R)\}$, $A_1 \oplus sl(2,R)$ and $A_2 \oplus A_1$.

\end{theorem}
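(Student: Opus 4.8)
The plan is to run a direct Lie point symmetry computation on (\ref{2.1}) for arbitrary $n$ and to let the branching in $\alpha$ emerge from the determining equations. I would take a generic generator $X=\xi(x,y)\partial_x+\eta(x,y)\partial_y$, form the prolongation $X^{[n+1]}=X+\sum_{j=1}^{n+1}\eta^{[j]}\partial_{y^{(j)}}$ with $\eta^{[j]}=D_x\eta^{[j-1]}-y^{(j)}D_x\xi$ and $\eta^{[0]}=\eta$, and impose $X^{[n+1]}E=0$ on the surface $E=0$, where $E=yy^{(n+1)}+\alpha y'y^{(n)}$. Writing $\partial_yE=y^{(n+1)}$, $\partial_{y'}E=\alpha y^{(n)}$, $\partial_{y^{(n)}}E=\alpha y'$ and $\partial_{y^{(n+1)}}E=y$, substituting $y^{(n+1)}=-\alpha y'y^{(n)}/y$ and separating the resulting identity by the independent monomials in $y',\dots,y^{(n)}$ produces the overdetermined linear determining system for $\xi$ and $\eta$. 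Before solving it in full I would record the three generators present for every $\alpha$: a weight count shows that $x\mapsto x+\varepsilon$, $x\mapsto\lambda x$ and $y\mapsto\mu y$ each leave $E=0$ invariant, so $\partial_x$, $x\partial_x$ and $y\partial_y$ are always symmetries; since $[\,\partial_x,x\partial_x\,]=\partial_x$ while $y\partial_y$ is central, these span $A_2\oplus A_1$, which already accounts for the ``$\alpha=$ else'' row.

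The crux is to decide when a fourth, projective generator appears. I would test $\Gamma=x^2\partial_x+\beta xy\partial_y$, whose prolongation has the closed form $\eta^{[j]}=a_j\,y^{(j-1)}+b_j\,x\,y^{(j)}$ with $b_j=\beta-2j$ and $a_j=j\beta-j(j-1)$, established by the one-line recursion $a_j=a_{j-1}+b_{j-1}$, $b_j=b_{j-1}-2$. Inserting this into the symmetry condition, the two $x$-weighted monomials cancel automatically once $yy^{(n+1)}=-\alpha y'y^{(n)}$ is used, and the condition collapses to
\[
\bigl[\alpha\beta+(n+1)\beta-n(n+1)\bigr]\,yy^{(n)}+\alpha n\bigl[\beta-(n-1)\bigr]\,y'y^{(n-1)}=0 .
\]
For $n\ge 2$ the two monomials are independent, so both coefficients must vanish. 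If $\alpha\neq 0$ the second forces $\beta=n-1$, and the first then gives $(n-1)(\alpha+n+1)=n(n+1)$, i.e. $\alpha=\frac{n+1}{n-1}$, with generator $x^2\partial_x+(n-1)xy\partial_y$; if $\alpha=0$ the second coefficient is automatic and the first gives $\beta=n$, with generator $x^2\partial_x+nxy\partial_y$. This pins down the two exceptional values exactly as in the table and simultaneously produces the conformal generators recorded there.

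It remains to establish completeness, namely that the full determining system has no further solutions away from these values and exactly the tabulated solutions at them. For $\alpha=0$ the equation is $y^{(n+1)}=0$ on $y\neq 0$, and I would invoke the classical maximal-symmetry result: for order $n+1\ge 3$ the solution symmetries $\partial_y,x\partial_y,\dots,x^{n}\partial_y$ span an $(n+1)$-dimensional abelian ideal in semidirect sum with $\{A_1\oplus sl(2,R)\}$, giving $(n+1)A_1\oplus_s\{A_1\oplus sl(2,R)\}$ of total dimension $(n+1)+4=n+5$, which is the first row. For $\alpha=\frac{n+1}{n-1}$ I would solve the reduced determining system to show that no generator beyond $\partial_x$, $x\partial_x$, $y\partial_y$ and the projective one survives, giving four symmetries; a direct bracket check, $[\,\partial_x,\,x^2\partial_x+(n-1)xy\partial_y\,]=2\bigl(x\partial_x+\frac{n-1}{2}y\partial_y\bigr)$ with $y\partial_y$ central, identifies the algebra as $A_1\oplus sl(2,R)$. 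Each algebra is then classified through the Mubarakzyanov scheme.

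The main obstacle is precisely this completeness step for general $n$: the projective test quickly reveals the resonant exponents, but ruling out every other branch requires controlling the whole determining system at all orders at once. I would manage this either by an induction on the prolongation order for the structure of $\eta^{[j]}$, or---more transparently---by exploiting the single integration $yy^{(n+1)}+\alpha y'y^{(n)}=0\Leftrightarrow y^{(n)}=Cy^{-\alpha}$, which lowers the order by one and lets the translation, scaling and projective generators, together with the critical exponent that produces $sl(2,R)$, be analysed on one parameter family of $n$th-order equations.
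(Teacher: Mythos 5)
Your existence computations are correct and rather elegant: the closed-form prolongation $\eta^{[j]}=[\,j\beta-j(j-1)]\,y^{(j-1)}+(\beta-2j)\,x\,y^{(j)}$ of $x^2\partial_x+\beta xy\partial_y$ is right, the $x$-weighted terms do cancel on the solution surface, and the two residual conditions $\alpha\beta+(n+1)\beta-n(n+1)=0$ and $\alpha n[\beta-(n-1)]=0$ do force $(\alpha,\beta)=(0,n)$ or $\bigl(\tfrac{n+1}{n-1},\,n-1\bigr)$; the $\alpha=0$ row via the classical maximal-symmetry algebra of $y^{(n+1)}=0$ is also sound. The genuine gap is the one you name yourself and then leave open: the theorem asserts exact counts ($n+5$, $4$, $3$), and testing the single ansatz $x^2\partial_x+\beta xy\partial_y$ can only show when that particular generator is admitted; it cannot show that nothing else is. That this is not a formality is visible at $n=1$, where the same equation admits symmetries such as $\log(y)\,\partial_x$ and $(1+\alpha)xy^{1+\alpha}\partial_x+y^{2+\alpha}\partial_y$, with $\xi$ depending on $y$ and $\eta$ nonlinear in $y$; something in the determining system must be shown to exclude such generators when $n\ge 2$, and your proposal never does this. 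This completeness step is where essentially all of the paper's proof lives: writing out the full linearised condition, the coefficient of $y''y^{(n)}$ forces $\xi_y=0$, that of $y''y^{(n-1)}$ forces $\eta_{yy}=0$ (so $\eta=b(x)+yc(x)$), that of $y'y^{(n)}$ forces $b=0$ for $\alpha\neq 0$, and those of $y'y^{(n-1)}$ and $y^{(n)}$ give the linear system $2c'-(n-1)a''=0$, $2(n+1+\alpha)c'-n(n+1)a''=0$, whose compatibility condition $(\alpha(n-1)-(n+1))\,a''=0$ reproduces your branching in $\alpha$ and \emph{simultaneously} bounds the solution space: three parameters generically, and four when $\alpha=\tfrac{n+1}{n-1}$, once the coefficient of $y^{(n-1)}$ forces $a'''=0$.

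A further caution about your proposed repair via the first integral $y^{(n)}=Cy^{-\alpha}$: this does not reduce the problem to a symmetry analysis of a single lower-order equation, because a point symmetry of the original equation need only map the one-parameter family into itself, possibly moving $C$ (for instance the scaling generated by $y\partial_y$ sends $y^{(n)}=Cy^{-\alpha}$ to $y^{(n)}=\mu^{1+\alpha}Cy^{-\alpha}$, so it is a symmetry of the original equation while fixing no member of the family), and conversely a symmetry of one member need not extend to the family. That route therefore requires an equivalence-group computation on the parametrised family, which is not obviously shorter than the direct coefficient-by-coefficient analysis the paper performs. In short, your construction half stands, but the counting claims of the theorem remain unproved until you run some version of that full determining-system argument.
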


\begin{proof}
 One can consider the general equation with left hand side in the form
 \begin{equation}
\Omega: yy^{(n+1)}+\alpha y'y^{(n)}.
\end{equation}
The linearised symmetry condition is $X^{(n+1)} \Omega=0$ when $\Omega = 0$, where
\[
X= \xi(x,y) \partial_x+\eta(x,y) \partial_y
\]
is the generator of the infinitesimal point transformation and
\[
X^{(n)}= \xi(x,y) \partial_x+\eta(x,y) \partial_y+\eta' \partial_{y'}+ \cdots +\eta^{(n)} \partial_{y^{(n)}}
\]
is its extension up to the $n^{th}$ derivative, that is,
\begin{equation} \label{2.3}
-\alpha y'y^{(n)}\eta+\alpha y y^{(n)} \eta^{(1)} +\alpha y y' \eta^{(n)}+y^2\eta^{(n+1)}=0.
\end{equation}
When we use the extension formula
\begin{equation} \label{2.4}
\eta^{(n)}=D^n\eta-\sum_{j=1}^n {n \choose j} y^{(n+1-j)}D^j \xi,
\end{equation}
where $D$ is the total derivative, we can rewrite (\ref{2.3}) as
\begin{equation} \label{2.5}
\begin{split}
-\alpha y'y^{(n)}\eta +\alpha yy^{(n)} (\eta_x+y'(\eta_y-\xi_x)-y'^2\xi_y))+\alpha y y'D^n\eta \\
-\alpha yy' \sum_{j=1}^n {n \choose j} y^{(n+1-j)}D^j \xi+y^2 D^{(n+1)}\eta \\
-y^2 \sum_{j=1}^{n+1} {n+1 \choose j} y^{(n+2-j)}D^j \xi =0.
\end{split}
\end{equation}
Comparing the coefficients of $y''y^{(n)}$ on both sides in equation (\ref{2.5}) we get $\xi _y=0$, that is,
\begin{equation}\label{2.6}
\xi =a(x).
\end{equation}
On substitution of equation (\ref{2.6}) into equation (\ref{2.5}) we have
\begin{equation} \label{2.7}
\begin{split}
-\alpha y'y^{(n)}\eta +\alpha yy^{(n)} (\eta_x+y'(\eta_y-a'))+\alpha y y'D^n\eta \\
-\alpha yy' \sum_{j=1}^n {n \choose j} y^{(n+1-j)}a^{(j)}+y^2 D^{(n+1)}\eta \\
-y^2 \sum_{j=1}^{n+1} {n+1 \choose j} y^{(n+2-j)}a^{(j)}=0.
\end{split}
\end{equation}
By comparison of the coefficients of $y''y^{(n-1)}$ in equation (\ref{2.7}) we see that $\eta_{yy}=0$, that is,
\begin{equation}
\eta=b(x)+ y c(x).
\end{equation}
When we use Leibnitz' rule for differentiating a product, we compute $D^n \eta$ as
\begin{equation}\label{2.8}
D^n \eta=b^n(x)+\ds \sum_{k=0}^n {n \choose k}y^{(k)}c^{(n-k)}.
\end{equation}
On substitution of equation (\ref{2.8}) into equation (\ref{2.7}) we have
\begin{equation} \label{2.9}
\begin{split}
-\alpha (b+y c)y'y^{(n)}+\alpha (b'+y c'+y'(c-a')) y y^{(n)}+\alpha b^{(n)}y y' \\
+\alpha \sum_{k=0}^n {n \choose k} y^{(k)}c^{(n-k)} y y'-\alpha \sum_{j=1}^n {n \choose j} y^{(n+1-j)} a^{(j)} y y' \\
+b^{(n+1)} y^2+\sum_{k=0}^{n+1} {n+1 \choose k} y^{(k)} c^{(n+1-k)}y^2\\
-\sum_{j=1}^{n+1} {n+1 \choose j} y^{(n+2-j)}a^{(j)}y^2=0.
\end{split}
\end{equation}
When we compare the coefficients of $y'y^{(n)}$  in equation (\ref{2.9}), we obtain
\begin{equation} \label{2.10}
b(x)=0.
\end{equation}
Using the equation (\ref{2.10}) we rewrite equation (\ref{2.9}) as
\begin{equation} \label{2.11}
\begin{split}
\alpha c'y^2 y^{(n)}+\alpha \sum_{k=0}^{n-1} {n \choose k} y^{(k)}c^{(n-k)} y y'
-\alpha \sum_{j=2}^n {n \choose j} y^{(n+1-j)} a^{(j)} y y' \\
+\sum_{k=0}^{n} {n+1 \choose k} y^{(k)} c^{(n+1-k)}y^2
-\sum_{j=2}^{n+1} {n+1 \choose j} y^{(n+2-j)}a^{(j)}y^2=0.
\end{split}
\end{equation}
By comparison of the coefficients of $y'y^{(n-1)}$ and $y^{(n)}$ in equation (\ref{2.11}), we obtain
\begin{eqnarray}
2c'-(n-1)a'' & = & 0 \label{2.12}\\
2(n+1+\alpha) c'-n(n+1) a'' & = & 0. \label{2.13}
\end{eqnarray}
Solving the equation (\ref{2.12}) for $c$ we obtain
\begin{equation} \label{2.14}
c=c_1+\frac{n-1}{2}a',
\end{equation}
where $c_1$ is the  constant of integration. On substitution of equation (\ref{2.14}) into equation (\ref{2.13}) we have
\begin{equation}\label{2.15}
(-\alpha +\alpha  n-n-1)a''=0.
\end{equation}
\subsection{Case 1: $\alpha=\frac{n+1}{n-1}$}
The equation (\ref{2.15}) is satisfied for $\alpha = \frac{n+1}{n-1}$. Comparing the coefficients of $y^{(n-1)}$ we get
\begin{equation}\label{2.15}
3c''-(n-1)a'''=0.
\end{equation}
We substitute (\ref{2.14}) into equation (\ref{2.15}) to obtain $a''' = 0$, that is,
\begin{equation}
a=a_1+a_2 x+a_3 x^2.
\end{equation}
The coefficient functions of the symmetries of the case $\alpha = \frac{n+1}{n-1}$ are
\begin{eqnarray}
\xi & = & a_1+a_2 x+a_3 x^2,\\
\eta & = & c_1 y+\frac{n-1}{2} a_2 y +(n-1)a_3 x y.
\end{eqnarray}
\subsection{Case 2: $\alpha =$ else}
If $\alpha = \text{else}$, then $-\alpha +\alpha  n-n-1 \neq 0$. From (\ref{2.15}) we obtain $a''=0$, that is,
\begin{equation}
a=a_1+a_2 x.
\end{equation}
The coefficient functions of the symmetries of the case $\alpha = \text{else}$ are
\begin{eqnarray}
\xi & = & a_1+a_2 x,\\
\eta & = & c_1 y+\frac{n-1}{2} a_2 y.
\end{eqnarray}

\subsection{Case 3: $\alpha = 0$}
The equation is
\begin{equation}\label{2.16}
b^{(n+1)} +\ds \sum_{k=0}^{n+1} {n+1 \choose k} y^{(k)} c^{(n+1-k)}-\ds \sum_{j=1}^{n+1} {n+1 \choose j} y^{(n+2-j)}a^{(j)}=0.
\end{equation}
We collect the constant and $y$ coefficients in equation (\ref{2.16}) and find that
\begin{equation}
b^{(n+1)}(x)=0 \:,\: c^{(n+1)}(x)=0.
\end{equation}
We rewrite equation (\ref{2.16}) as
\begin{equation}\label{2.17}
\ds \sum_{k=1}^{n} {n+1 \choose k} y^{(k)} c^{(n+1-k)}-\ds \sum_{j=2}^{n+1} {n+1 \choose j} y^{(n+2-j)}a^{(j)}=0
\end{equation}
and collect the coefficients of $y^{(n)}$ and $y^{(n-1)}$ in equation (\ref{2.1}) to obtain
\begin{eqnarray}
2 c^{(1)}-n a^{(2)} & = & 0 \label{2.18}\\
3 c^{(2)}-(n-1) a^{(3)} & = & 0. \label{2.19}
\end{eqnarray}
From equation (\ref{2.18}) $c=c_1+\frac{n}{2}a^{(1)}$ and we substitute this into equation  (\ref{2.19}) to obtain $a^{(3)}=0$, that is,
\begin{equation}\label{2.20}
a=a_1+a_2 x+a_3 x^2.
\end{equation}
The coefficient functions of the symmetries of the case $\alpha=0$ are
\begin{eqnarray}
\xi & = & a_1+a_2 x+a_3 x^2 \quad \mbox{\rm and} \nonumber \\
\eta & = & b_1+b_2 x+\cdots +b_{n+1} x^{n}+c_1y+\frac{n}{2}a_2 y+a_3 n x y.\nonumber
\end{eqnarray}

\end{proof}

\section{Singularity Analysis}

We examine the specific class of equations for the value of $\alpha = -(n+1)$ introduced above in terms of singularity analysis. We examine the sequence of equations introduced above in terms of singularity analysis. We follow the general method as outlined in \cite{Ramani 89 a, Tabor 89 a} with the modification for negative nongeneric  resonances introduced by Andriopoulos {\it et al} \cite{Andriopoulos 06 a}. We illustrate the method on the fifth-order equation,
\begin{equation}
yy^{(5)}-5 y' y^{(4)}=0.
\end{equation}
To determine the leading-order  behaviour we set $y=\alpha \chi ^p$, where $\chi=x-x_0$ and $x_0$ is the location of the putative singularity. We obtain
\[
a^2 (p-4) (p-3) (p-2) (p-1) p \chi^{2 p-5}-5 a^2 (p-3) (p-2) (p-1) p^2 \chi^{2 p-5}
\]
which is zero if $(p-4)=5p$, {\it ie}, $p=-1$\footnote{The positive integral values of $p$ sre not acceptable for singularity analysis.}. Note that the coefficient of the leading-order term is arbitrary.

To establish the terms at which the remaining constants of integration occur in the Laurent Expansion we make the substitution
\[
y=\alpha \chi^{-1} +m \chi^{-1+s},
\]
where the various values at which $s$ may take are determined by the coefficient of terms linear in $m$ being zero and so $m$ is arbitrary. The coefficient of $m$ is a fifth-order polynomial in $s$, the roots of which are
\[
s=-1,\,0,\,6,\,\frac{1}{2} \left(5-i \sqrt{39}\right),\,\frac{1}{2} \left(5+i \sqrt{39}\right).
\]
 The resonances are discordant. However, this problem can be overcome by the substitution $y(x) \rightarrow \frac{1}{w(x)}$. From Table1 the value of the leading-order, $-1$, is always present. For a system to possess the Painlev\'e Property the resonance must be an integer. If $\alpha < n$, the Laurent expansion is known as a Right Painlev\'e Series because the exponents commence at $-1$ and increase to a presumed infinity. If $\alpha > n$, the Laurent expansion is known as a Full Painlev\'e Series.

 In  the case  that  the resonance  is  a  rational  number  the expansion can be made in terms of fractional powers -- the same be true if the exponent of the leading-order behaviour be rational. In this case the solution cannot be analytic. Rather, it has branch point  singularities.  Provided  the  denominator  of  the  fractional power is not great, the expansion is acceptable. If the dominator is large, the complex plane is divided by so many branch cuts as to  be  effectively  useless  for  the  almost  inevitable  numerical computations used in the solution. When fractional powers are included in the expansion, the system is said to possess the weak Painlev\'e Property.

 \begin{table}[!htbp]
\caption{Leading order and resonances of Equation (\ref{2.1}) under the transformation $y(x)\rightarrow \frac{1}{w(x)}$}
\begin{center}
\begin{tabular}{|c|c|c|}
\hline
n & Leading-order & Resonances \\
\hline
$2$ & $-1,0,-\frac{2}{1+\alpha}$ & $-1,0,1-\alpha$\\
\hline
$3$ & $-2,-1,0,-\frac{3}{1+\alpha}$ & $-1,0,1,2-\alpha$\\
\hline
$4$ & $-3,-2,-1,0,-\frac{4}{1+\alpha}$ & $-1,0,1,2,3-\alpha$\\
\hline
$n$ & $-n+1, -n+2, \cdots , -1, 0,-\frac{n}{1+\alpha}$ & $-1,0,1,\cdots ,n-2, n-1-\alpha$\\
\hline
\end{tabular}
\end{center}
\label{default}
\end{table}%
Consistency is automatically satisfied as all terms in the equation are dominant. We deduce the following theorem.

 \begin{theorem}
The exponent of the leading-order term and the resonances of the $n$th member of the class of equations,
\begin{equation*}\label{3.1}
yy^{(n+1)}+ \alpha  y' y^{(n)}=0, ~~~~n>1, \alpha ~~\text{rational},
\end{equation*}
under the transformation $y(x) \rightarrow \dfrac{1}{w(x)}$ are $p=-n+1, -n+2, \cdots , -1, 0,-\frac{n}{1+\alpha}$ and $s=-1,0,1,\cdots ,n-2, n-1-\alpha$.
\end{theorem}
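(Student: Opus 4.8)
The plan is to carry out a standard leading-order and resonance (Painlev\'e) analysis, but on the equation obtained \emph{after} the reciprocal substitution $y = 1/w$. First I would record the transformed equation in polynomial form. Since each $y^{(k)}$ acquires a denominator $w^{k+1}$ under $y = 1/w$, multiplying the whole of $\Omega = yy^{(n+1)}+\alpha y'y^{(n)}$ by $w^{n+3}$ clears all denominators and produces a polynomial differential equation $\widetilde\Omega(w) := w^{n+3}\,\Omega(1/w)=0$, again of order $n+1$. This $\widetilde\Omega$ is the object to which the test is applied.

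For the leading-order term I would substitute $w \sim a\chi^{p}$, so that $y \sim a^{-1}\chi^{-p}$, and exploit the fact that differentiating a pure power only produces falling factorials: with $m_0 := -p$ one has $y^{(k)} \sim a^{-1}[m_0]_k\chi^{m_0-k}$, where $[m_0]_k = m_0(m_0-1)\cdots(m_0-k+1)$. Inserting this into $\Omega$ and using $[m_0]_{n+1}=[m_0]_n(m_0-n)$ to factor out the common falling factorial gives
\[
\Omega(1/w)\big|_{\mathrm{lead}} = a^{-2}\chi^{\,2m_0-n-1}\,[m_0]_n\bigl((1+\alpha)m_0-n\bigr).
\]
The leading-order balance therefore requires $[m_0]_n\bigl((1+\alpha)m_0-n\bigr)=0$, whose roots are $m_0\in\{0,1,\dots,n-1\}$ together with $m_0 = n/(1+\alpha)$; translating back through $p=-m_0$ yields exactly the asserted list $p=-n+1,\dots,-1,0,-\tfrac{n}{1+\alpha}$.

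For the resonances I would expand about the branch $p=-1$, which is present for every member of the class and at which the Laurent series for $w$ commences. Writing $w = a\chi^{-1}+m\chi^{-1+s}$, the crucial simplification is that the leading term corresponds to $y_0 = a^{-1}\chi$, a linear function, so that $y_0^{(n)}=y_0^{(n+1)}=0$ for $n>1$; hence both the contribution $\Omega(y_0)$ coming from the prefactor $w^{n+3}$ and two of the four terms in the linearisation of $\Omega$ drop out. The surviving part of the coefficient of $m$ is, after setting $t:=s+1$ (the exponent carried by the induced perturbation $\delta y = -y_0^{2}\,\delta w \sim \chi^{t}$), proportional to
\[
[t]_n\,(t-n+\alpha),\qquad [t]_n = t(t-1)\cdots(t-n+1).
\]
Setting this to zero gives $t\in\{0,1,\dots,n-1\}$ and $t=n-\alpha$, that is $s=-1,0,1,\dots,n-2$ and $s=n-1-\alpha$, which is precisely the claimed resonance list.

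The step I expect to be the main obstacle is getting the resonance computation onto the correct branch and keeping the prefactor honest. It is tempting to linearise about the genuinely singular branch $p=-n/(1+\alpha)$, but there the reciprocal substitution merely reproduces the discordant (complex) resonances of the untransformed equation; the whole value of passing to $w$ is that the always-present branch $p=-1$ delivers integer-separated resonances. One must also be careful that, although $\Omega(y_0)=0$ for these special power-law leading terms, the factor $w^{n+3}$ and the induced perturbation $\delta y=-y_0^{2}\,\delta w$ still have to be tracked so that the coefficient of $m$ is read off at the correct power of $\chi$; it is the clean factorisation $[t]_n(t-n+\alpha)$ that makes the final identification of leading order and resonances transparent.
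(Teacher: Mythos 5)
Your proposal is correct, and its skeleton is the same as the paper's: apply $y \rightarrow 1/w$, obtain the leading-order exponents from a power-law ansatz for $w$, and compute the resonances by perturbing the branch $p=-1$, ending with the identical polynomial $(s+1)s(s-1)\cdots(s-n+2)\,(s-n+1+\alpha)=0$. The genuine difference is in how the coefficient of $m$ is extracted. The paper expands $w^{-1}=\frac{x}{\beta}\bigl(1+\frac{m}{\beta}x^{s}\bigr)^{-1}$ as an infinite geometric series, differentiates it term by term, substitutes the resulting series into the equation, and reads the linear-in-$m$ contribution (the $k=1$, $j=1$ terms) off a double sum. You instead observe that the $p=-1$ branch of $w$ corresponds to $y_0=a^{-1}\chi$, which for $n>1$ is an exact solution of $\Omega=0$ because $y_0^{(n)}=y_0^{(n+1)}=0$; hence the resonance condition is the linearisation of $\Omega$ at $y_0$ applied to the induced perturbation $\delta y=-y_0^{2}\,\delta w\sim\chi^{s+1}$, in which only the two terms $y_0\,\delta y^{(n+1)}$ and $\alpha y_0'\,\delta y^{(n)}$ survive, giving $[t]_n(t-n+\alpha)$ with $t=s+1$ at once. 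Your variational route buys structural clarity: it explains why the computation collapses to a single falling factorial (an exact linear solution kills half the linearised terms), it avoids the paper's infinite-series bookkeeping, and your preliminary multiplication by $w^{n+3}$ makes ``coefficient of $m$'' well defined without expanding any denominators. The paper's route stays closer to the raw mechanics of the Painlev\'e test and, at least formally, displays the full Laurent recursion rather than only its linear part. Both arguments make the same branch choice, which, as you correctly stress, is the entire point of the reciprocal transformation: perturbing the branch $p=-n/(1+\alpha)$ would merely reproduce the discordant resonances of the untransformed equation.
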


\begin{proof}
Substituing $y={w^{-1}}$ in equation (\ref{2.1}) then we have
\begin{equation}
w^{-1}(w^{-1})^{(n+1)}+\alpha(w^{-1})^{\prime}(w^{-1})^{(n)}=0\label{sing1}
\end{equation}
To find the leading order of equation (\ref{sing1}), let us take $w=\beta x^{p}$, Obviously
\begin{equation}
w^{-1}=\frac{1}{\beta}x^{-p}
\end{equation}
Then we may obtained the corresponding first and second order derivative is
\begin{eqnarray}
{(w^{-1})}^{\prime}=\frac{1}{\beta}(-1)p x^{-p-1}\\
{(w^{-1})}^{\prime\prime}=\frac{1}{\beta}(-1)^{2}p(p+1)x^{-p-2}
\end{eqnarray}
we can rewritting the general form as
\begin{eqnarray}
{(w^{-1})}^{(n)}=\frac{1}{\beta}(-1)^{n}p(p+1)(p+2)....(p+(n-1))x^{-p-n}\label{sgen1}\\
{(w^{-1})}^{(n+1)}=\frac{1}{\beta}(-1)^{n+1}p(p+1)(p+2)....(p+(n-1))(p+n)x^{-p-n-1}\label{sgen2}
\end{eqnarray}
By substituting the equation (\ref{sgen1}) and (\ref{sgen2}) to the equation (\ref{sing1})
\begin{eqnarray}
\frac{1}{\beta^{2}}(-1)^{n+1}p(p+1)(p+2)...(p+n-1)(p+n)x^{-2p-n-1}\nonumber&&\\
+\frac{\alpha}{\beta^2}(-1)^{n+1}p^2(p+1)(p+2)...(p+n-1)x^{-2p-n-1}=0
\end{eqnarray}
Collecting the coefficients of $x^{-2p-n-1}$ and equating it into zero
\begin{equation}
p(p+1)(p+2)...(p+n-1)(p+n+\alpha p)=0
\end{equation}
We get the following values for $p$, $p=0,-1,-2,...-n+1, \frac{-n}{1+\alpha}$
To find the resonances let us take $w=\beta x^{-1}+m x^{-1+s}$ therefore
\begin{equation}
w^{-1}=\frac{x}{\beta}(1+\frac{m}{\beta}x^s)^{-1}
\end{equation}
\begin{equation}\label{sing2}
w^{-1}=\frac{1}{\beta}\sum_{k=0}^{\infty}(-1)^k{(\frac{m}{\beta})}^k x^{ks+1}
\end{equation}
\begin{equation}\label{sing3}
{(w^{-1})}^{\prime}=\frac{1}{\beta}\sum_{k=0}^{\infty}(-1)^k{(\frac{m}{\beta})}^k (ks+1)x^{ks}
\end{equation}
\begin{equation}\label{sing4}
{(w^{-1})}^{\prime\prime}=\frac{1}{\beta}\sum_{k=0}^{\infty}(-1)^k{(\frac{m}{\beta})}^k (ks+1)(ks)x^{ks-1}
\end{equation}
\begin{equation}\label{sing5}
{(w^{-1})}^{(n)}=\frac{1}{\beta}\sum_{k=0}^{\infty}(-1)^k{(\frac{m}{\beta})}^k (ks+1)(ks)(ks-1)....(ks-(n-2))x^{ks-(n-1)}
\end{equation}
\begin{equation}\label{sing6}
{(w^{-1})}^{(n+1)}=\frac{1}{\beta}\sum_{k=0}^{\infty}(-1)^k{(\frac{m}{\beta})}^k (ks+1)(ks)(ks-1)....(ks-(n-2))(ks-(n-1))x^{ks-n}
\end{equation}
Substituting equations (\ref{sing2}, \ref{sing3}, \ref{sing5}) and (\ref{sing6}) in equation (\ref{sing1}) then the resultant equation is given by
\begin{eqnarray*}\label{sing6}
\frac{1}{\beta^2}x^{-n+1}\left(\sum_{k=0}^{\infty}\bigg\{\sum_{j=0}^{k}(-1)^k{(\frac{m}{\beta})}^k (js+1)(js)(js-1)....(js-(n-2))(js-(n-1))\bigg\}\right.&&\\\left.-\alpha \sum_{k=0}^{\infty}\bigg\{\sum_{j=0}^{k}(-1)^k{(\frac{m}{\beta})}^k (js+1)(js)(js-1)....(js-(n-2))((k-j)s+1)\bigg\}\right)x^{ks}
\end{eqnarray*}
To collect the coefficient of $m$, we have to take $k=1$ and $j=1$ therefore
\begin{equation}
(s+1)(s)(s-1)....(s-(n-2))(s-n+1+\alpha)=0
\end{equation}

Hence the resonances are $s=-1,0,1,\cdots ,n-2, n-1-\alpha$

\end{proof}

 \section{Discussion}
 We have examined the equation,
 $$ yy^{(n+1)}+\alpha y'y^{(n)}=0, $$
 in terms of the algebraic properties of its Lie point symmetries and its integrability in terms of analytic functions. We found that the Lie point symmetries for general $n$ are
 \begin{description}
 \item
 $\alpha = 0:  \partial_y, x \partial_y, \cdots x^{(n)}\partial_y, y \partial_y, \partial_x, x \partial_x+\frac{n}{2} y \partial_y, x^2 \partial_x+nxy \partial_y$
 \item
 $\alpha =\frac{n+1}{n-1}: \partial_x, x \partial_x+\frac{n-1}{2} y \partial_y,x^2 \partial_x+(n-1) x y \partial_y,y \partial_y$
 \item
 $\alpha = \text{else}: \partial_x, x \partial_x,y \partial_y$
 \end{description}
 The algebras are $(n+1)A_1 \oplus_s\{A_1 \oplus sl(2,R)\}$, $A_1 \oplus sl(2,R)$ and $A_2 \oplus A_1$. respectively. In terms of singularity analysis, under the transformation $y(x)\rightarrow \frac{1}{w(x)}$ the solution for $w(x)$ is either analytic over the complex plane or on a portion of it defined by branch cuts. It follows that $y(x)$ is also analytic.

 \section*{Acknowledgements}

PGLL thanks the University of KwaZulu-Natal and the National Research Foundation of South-Africa for financial support.

\end{document}